


\documentclass[a4paper]{article}
\pdfoutput=1

\usepackage{amsthm}
\usepackage{amsmath}
\usepackage{amssymb}
\usepackage{amsfonts}
\usepackage{epsfig}
\usepackage{graphics}
\usepackage{graphicx}
\usepackage{booktabs}
\usepackage{pdfpages}

\usepackage{complexity}
\usepackage{url}

\usepackage{latexsym}
\usepackage{psfrag}
\usepackage{enumerate}
\usepackage{here}
\usepackage{esvect}
\usepackage[update,prepend]{epstopdf}
\usepackage{anyfontsize}
\pdfminorversion=7
\usepackage{url}
\usepackage{hyperref}
\usepackage{fullpage}
\usepackage[displaymath, left, mathlines, running]{lineno}

\usepackage{microtype}

\newtheorem{theorem}{Theorem}

\bibliographystyle{plainurl}

\newcommand{\ie}{{i.e.}}
\newcommand{\eg}{{e.g.}}

\newcommand{\alg}{\textsf{ALG}}
\newcommand{\opt}{\textsf{OPT}}

\newcommand{\area}{{\rm area}}

\newcommand{\RR}{\mathbb{R}} 

\def\R{\mathcal R}

\def\Prob{{\rm Prob}}

\providecommand{\intd}[0]%
{\;\mbox{d}}

\newcommand{\e}{\mathrm{e}}

\usepackage[noline,linesnumbered]{algorithm2e}
\SetArgSty{textrm}
\let\oldnl\nl
\newcommand{\nonl}{\renewcommand{\nl}{\let\nl\oldnl}}
\makeatletter
\def\TitleOfAlgo{\@ifnextchar({\@TitleOfAlgoAndComment}{\@TitleOfAlgoNoComment}}
\def\@TitleOfAlgoAndComment(#1)#2{\nonl\hspace*{-1.5em}#2 #1\;}
\def\@TitleOfAlgoNoComment#1{\nonl\hspace*{-1.5em}#1\;}
\makeatother
\DontPrintSemicolon

\newcommand{\later}[1]{}
\newcommand{\old}[1]{}

\thispagestyle{empty}

\title{A Couple of Simple Algorithms for $k$-Dispersion}


\author{
  Ke Chen\thanks{%
    Department of Computer Science and Engineering, The Pennsylvania State University, PA, USA.
    Email~\texttt{kxc5915@psu.edu}}
  \and
  Adrian Dumitrescu\thanks{%
    Algoresearch L.L.C., Milwaukee, WI, USA.
    Email~\texttt{ad.dumitrescu@algoresearch.org}}
}

  


\begin{document}

\maketitle

\begin{abstract}
  Given a set $P$ of $n$ points in $\RR^d$, and a positive integer $k \leq n$, the $k$-dispersion problem
  is that of selecting $k$ of the given points so that the minimum inter-point distance among them
  is maximized (under Euclidean distances).   Among others, we show the following:

  \medskip
  (I) Given a set $P$ of $n$ points in the plane, and a positive integer $k \geq 2$, the $k$-dispersion problem
  can be solved by an algorithm running in $O\left(n^{k-1} \log{n}\right)$ time.
  This extends an earlier result for $k=3$,
  due to Horiyama, Nakano, Saitoh, Suetsugu, Suzuki, Uehara, Uno, and Wasa (2021) to arbitrary $k$.
  In particular, it improves on previous running times for small $k$. 

  \medskip
  (II) Given a set $P$ of $n$ points in $\RR^3$, and a positive integer $k \geq 2$, the $k$-dispersion problem
  can be solved by an algorithm running in
\[ \begin{cases}
    O\left(n^{k-1} \log{n}\right) \text{time}, & \text{if } k \text{ is even};\\
    O\left(n^{k-1} \log^2{n}\right) \text{time}, & \text{if } k \text{ is odd}.
  \end{cases} \]
For $k \geq 4$, no combinatorial algorithm running in $o(n^k)$ time was known for this problem.

  \medskip
  (III) Let $P$ be a set of $n$ random points uniformly distributed in $[0,1]^2$.
  Then under suitable conditions, a $0.99$-approximation for $k$-dispersion can be computed in $O(n)$ time
  with high probability. 

\end{abstract}

\section{Introduction}\label{sec:intro}

The general \emph{dispersion} problem arises in selecting facilities to \emph{maximize} some function of the
distances between the facilities, \eg, the minimum or the average distance.
A geometric version was studied by Wang and Kuo~\cite{WK88}:
Given a set of $n$ location points in $\RR^d$ ($d$ is fixed) at which facilities may be placed and
a positive integer $k \leq n$,
the $k$-\emph{dispersion} problem (sometimes known also as the max-min $k$-dispersion problem) in $\RR^d$
is that of placing $k$ facilities so that the minimum pairwise distance between them is maximized.
Observe that since $k \leq n$, the optimization is over a nonempty set and so the problem is well defined,
and in general, a maximizing $k$-subset is not unique. 

Wang and Kuo~\cite{WK88} gave a polynomial algorithm for $d=1$ and proved that the problem is $\NP$-hard
already for $d=2$. The case $k=2$ in $\RR^2$ corresponds to the problem of computing the \emph{diameter}
of a planar point set and admits an optimal algorithm running in $O(n \log{n})$ time~\cite{PS85}. 
The case $k=3$ in $\RR^2$ admits an  algorithm running in $O(n^2 \log{n})$ time~\cite{HNS+21}.
Here we extend the above results for arbitrary $k \geq 2$.

\begin{theorem}\label{thm:plane}
  Given a set $P$ of $n$ points in the plane, and a positive integer $k \geq 2$, the $k$-dispersion problem
  can be solved by a combinatorial algorithm running in $O\left(n^{k-1} \log{n}\right)$ time.
\end{theorem}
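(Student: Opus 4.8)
The plan is to reduce the optimization to a large but structured family of elementary geometric subproblems. Write $\delta^*$ for the optimal $k$-dispersion value. For a subset $S\subseteq P$ let $d(S)=\min_{s\neq s'\in S}|ss'|$ denote the smallest pairwise distance inside $S$ (with $d(S)=+\infty$ when $|S|\le 1$), and let $f(S)=\max_{p\in P}\min_{s\in S}|ps|$ denote the largest possible distance from a point of $P$ to its nearest neighbour in $S$. The first, and conceptually most important, step is to establish the identity
\[ \delta^*=\max_{S\in\binom{P}{k-1}}\min\bigl(d(S),f(S)\bigr). \]
The inequality $\delta^*\le\max_{S}\min\bigl(d(S),f(S)\bigr)$ follows by taking an optimal $k$-set $T$, deleting any one of its points to obtain $S$, and observing that then $d(S)\ge\delta^*$ and $f(S)\ge\delta^*$. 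Conversely, for any $(k-1)$-subset $S$, adjoining the point $p^*\in P$ that attains $f(S)$ yields a feasible $k$-set whose minimum pairwise distance is exactly $\min\bigl(d(S),f(S)\bigr)$, so the right-hand side is always a valid dispersion value and hence at most $\delta^*$.

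Granting the identity, the algorithm simply enumerates all $\binom{n}{k-1}=O(n^{k-1})$ subsets $S$ of size $k-1$, computes $\min\bigl(d(S),f(S)\bigr)$ for each, and returns the maximum. The quantity $d(S)$ can be maintained in $O(k^2)=O(1)$ time per subset, so the entire running time hinges on how fast the values $f(S)$ are produced: since $f(S)$ is a farthest-point quantity over all of $P$ against the nearest-neighbour-in-$S$ distance function, the goal is to evaluate it in $O(\log n)$ amortized time per subset, which would give the claimed $O(n^{k-1}\log n)$ bound. The case $k=2$ is degenerate: there $S$ is a single point, $f(S)$ is the farthest point from it, and the whole problem reduces to the planar diameter, solvable in $O(n\log n)=O(n^{k-1}\log n)$ time, recovering the known bound.

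The main obstacle is precisely this evaluation of $f(S)$ in logarithmic amortized time. One route is to enumerate the subsets along a path in the subset lattice that inserts or deletes a single point of $S$ at each step, maintaining a data structure that reports $\max_{p\in P}\dist(p,S)$ under such updates; the difficulty is that $\dist(\cdot,S)$ is a lower envelope of $|S|$ distance functions, so a single update can disturb the maximizer globally. A second, and I think cleaner, route is to peel off only $k-2$ of the coordinates: enumerate the $\binom{n}{k-2}=O(n^{k-2})$ subsets $R$ of size $k-2$, precompute $a(p)=\min_{r\in R}|pr|$ for every $p$, and then solve in $O(n\log n)$ time the planar two-point subproblem of maximizing $\min\bigl(a(p),a(q),|pq|\bigr)$ over pairs $p,q\in P$; capping by $d(R)$ and taking the maximum over $R$ again yields $\delta^*$. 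For $k=3$ this is exactly one weighted farthest-pair computation per point, matching the $O(n^2\log n)$ algorithm of~\cite{HNS+21}.

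The crux of the whole proof is therefore to show that this two-point subproblem admits an $O(n\log n)$ solution. By monotonicity the optimum equals the largest threshold $t$ for which the points with $a(\cdot)\ge t$ have diameter at least $t$, so the heart of the matter is to organize a search over the $a$-sorted points together with the diameters of their prefixes so that only a single logarithmic factor is incurred; naively recomputing a prefix diameter per threshold loses an extra $\log$, which the odd/even $\RR^3$ split in part~(II) suggests is exactly the kind of subtlety one must control. I expect this prefix-diameter search, rather than the structural identity, to be the technically demanding component of the argument.
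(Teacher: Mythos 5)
Your structural reduction is correct: the identity $\delta^*=\max_{S}\min(d(S),f(S))$ holds, and so does the refined version you actually use, namely $\delta^*=\max_{|R|=k-2}\max_{p,q}\min\bigl(d(R),a(p),a(q),|pq|\bigr)$ with $a(p)=\min_{r\in R}|pr|$. But the proof has a genuine gap exactly where you yourself locate the crux: you never give the $O(n\log n)$ algorithm for the weighted two-point subproblem, you only ``expect'' it to work out. Your threshold reformulation is sound --- sorting by $a(\cdot)$ in decreasing order, the answer is $\max_j \min\bigl(a(p_j),\mathrm{diam}(Q_j)\bigr)$ where $Q_j$ is the $j$-th prefix, and since prefix diameters are non-decreasing while $a(p_j)$ is non-increasing this is a unimodal search --- but every concrete instantiation you hint at is too slow: recomputing each probed prefix diameter from scratch costs $O(n\log n)$, so even a binary search over the crossing point gives $O(n\log^2 n)$ per subproblem and $O(n^{k-1}\log^2 n)$ overall, one logarithmic factor short of the claim. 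Appealing to~\cite{HNS+21} does not close the hole either: that paper gives an $O(n^2\log n)$ algorithm for full $3$-dispersion, not an $O(n\log n)$ routine for the one-point-fixed version, let alone for your subproblem with arbitrary weights $a(\cdot)$ induced by a $(k-2)$-subset. The gap is fixable with one more idea you did not supply: presort $P$ by $x$-coordinate once, so that the convex hull of any prefix can be built in $O(n)$ time by a monotone-chain scan and its diameter computed in $O(n)$ time by rotating calipers; then the $O(\log n)$ binary-search probes cost $O(n\log n)$ in total. Without some such step, the theorem is not proved.

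For comparison, the paper's proof avoids your crux altogether by a different decomposition: instead of deleting points from the optimal set one or two at a time and facing a weighted residual problem, it guesses the \emph{closest pair} $(a,b)$ of the optimal $k$-set ($O(n^2)$ choices, $x=|ab|$), deletes in $O(n)$ time every point within distance $x$ of $a$ or $b$, and recursively solves \emph{unweighted} $(k-2)$-dispersion on the survivors, checking that the recursive answer is at least $x$. Guessing the pair that realizes the minimum makes all cross-distances to $\{a,b\}$ automatically at least $x$ after filtering, which is precisely what kills the weights that plague your route; the recursion bottoms out at $k=2$ (diameter, $O(n\log n)$) and $k=3$ (the $O(n^2\log n)$ algorithm of~\cite{HNS+21}), and the recurrence $\binom{n}{2}\cdot O\bigl(n+n^{k-3}\log n\bigr)$ gives $O(n^{k-1}\log n)$. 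Your approach, if completed as above, would be a legitimately different and self-contained proof (it would not need~\cite{HNS+21} as a black box), but as written it is incomplete.
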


All the algorithms mentioned above are combinatorial and so the algorithm in Theorem~\ref{thm:plane}
runs in $o(n^k)$ time for $k \geq 4$.

More generally, the $k$-dispersion problem in a complete graph $G$ with positive edge-weights, is that of selecting a
set of $k$ vertices in which every pair is connected by an edge of weight $\geq r$, such that
$r$ is maximized. 
For general $k$, Akagi, Araki, Horiyama, Nakano, Okamoto, Otachi, Saitoh, Uehara, Uno, and Wasa~\cite{AAH+18}
showed that the $k$-dispersion problem in graphs can be solved in
$O(n^{\omega \lfloor k/3 \rfloor + (k \pmod 3)} \log{n})$ time\footnote{The result of~\cite{NP85} for
  the $k$-clique problem is actually cited incorrectly as $O(n^{\omega k/3})$  instead
  $O(n^{\omega \lfloor k/3 \rfloor + (k \pmod 3)})$.}
using a Fast Matrix Multiplication (FMM) based algorithm for the $k$-clique problem
due to Ne{\v{s}}et{\v{r}}il and Poljak~\cite{NP85}.  
Here $\omega$ is the exponent of matrix multiplication~\cite[Ch.~10]{Mat10},
and $\omega < 2.372$ is the best known bound~\cite{DWZ23,VXXZ24}.

Using results from rectangular matrix multiplication due to Eisenbrand and Grandoni~\cite{EG04},
here we sharpen the above results for certain values of $k$.
Our improvements are listed against the running times from~\cite{AAH+18} (corrected as explained above)
in Table~\ref{table:k}. Note that the running times for our combinatorial algorithms when $k=4$ and $k=5$
beat those of the previous FMM based algorithms from~\cite{AAH+18}.

\begin{table}[h]
\centering
\scalebox{0.83}{
    \begin{tabular}{|c|c c c c c c c|} \hline
    $k$ & 2 & 3 & 4 & 5 & 6 & 7 & 8 \\[0.5ex] \hline
      Previous \cite{AAH+18} & $O(n \log{n})$ & $O(n^2 \log{n})$ & $O(n^{3.372})$ & $O(n^{4.372})$ &
      $O(n^{4.744})$ & $O(n^{5.744})$ & $O(n^{6.744})$ \\
    New (FMM \cite{EG04}) & &  & $O(n^{3.251})$ & $O(n^{4.086})$ & & $O(n^{5.590})$ & $O(n^{6.502})$ \\
    New (comb) & &  & $O(n^3 \log{n})$ & $O(n^4 \log{n})$ & $O(n^5 \log{n})$ & $O(n^6 \log{n})$ & $O(n^7 \log{n})$ \\\hline
\end{tabular}}
\vspace{0.1in}
\caption{Running times for $d=2$ and $k=2,\ldots,8$. For the exponents that are rounded up
  (for the FMM-based algorithms), the logarithmic factors are unnecessary.}
\label{table:k}
\end{table}

\begin{theorem}\label{thm:fmm}
  Given a set $P$ of $n$ points in the plane, and a positive integer $2 \leq k \leq 8$, the $k$-dispersion problem
  can be solved by a FMM based algorithm whose running time is specified in the second row of Table~\ref{table:k}.
  The same running times hold for $k$-dispersion in a complete graph on $n$ vertices with positive edge weights. 
\end{theorem}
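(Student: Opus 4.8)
The plan is to reduce $k$-dispersion, in both the geometric and the edge-weighted graph settings, to a short sequence of $k$-clique detection problems, and to solve each of those with the rectangular-matrix-multiplication algorithm of Eisenbrand and Grandoni~\cite{EG04}. The two settings are unified immediately: in the planar case I would first compute all $\binom{n}{2}$ interpoint distances in $O(n^2)$ time, after which only these pairwise weights matter; in the graph case the weights are already given. In either setting, for a fixed threshold $r$ let $G_r$ be the graph on the same $n$ vertices that retains exactly the pairs of weight (distance) at least $r$. Then there is a $k$-subset whose minimum pairwise separation is at least $r$ if and only if $G_r$ contains a $k$-clique, so the decision version of $k$-dispersion coincides exactly with $k$-clique detection in $G_r$.

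The optimal dispersion value is one of the $O(n^2)$ candidate weights. Sorting them costs $O(n^2 \log n)$, and a binary search over the sorted list then solves the optimization with $O(\log n)$ calls to $k$-clique detection. For every $k$ appearing in the table the sorting cost is dominated, so the total running time is $O(\log n)$ times the cost of a single $k$-clique test in an $n$-vertex graph.

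The heart of the matter is that single clique test. Following Ne{\v{s}}et{\v{r}}il--Poljak~\cite{NP85} as refined in~\cite{EG04}, I would split the $k$ vertices of a sought clique into three blocks of sizes $a,b,c$ with $a+b+c=k$. Enumerating all $a$-, $b$-, and $c$-cliques of $G_r$ yields three index sets of sizes $O(n^a)$, $O(n^b)$, $O(n^c)$; two such cliques are linked when their vertex-disjoint union is again a clique. Since pairwise linkage of one clique from each block forces their triple union to be a $k$-clique, detecting this tripartite ``triangle'' is equivalent to $k$-clique detection, and it reduces to a single Boolean product of an $n^a \times n^b$ matrix with an $n^b \times n^c$ matrix, at cost $O(n^{\omega(a,b,c)})$ with $\omega(a,b,c)$ the corresponding rectangular exponent. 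When $k$ is divisible by $3$ the balanced choice $a=b=c=k/3$ with square multiplication is already optimal and rectangular multiplication buys nothing; this is why the cells for $k=3$ and $k=6$ (and the trivial $k=2$) are left blank. For $k=4,5,7,8$, however, an \emph{unbalanced} split (\eg, $\{1,1,2\}$ for $k=4$) is strictly better.

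The main obstacle is the final, purely quantitative step: choosing the optimal unbalanced partition for each $k$ and feeding the best currently known bounds on the rectangular exponents $\omega(a,b,c)$ into the above cost, so as to obtain the tabulated values $O(n^{3.251})$, $O(n^{4.086})$, $O(n^{5.590})$, $O(n^{6.502})$ and to verify that each beats the balanced bound of~\cite{AAH+18}. A last detail to dispatch is the vanishing of the $O(\log n)$ factor from the binary search: because the rectangular exponents are known only to finite precision, each tabulated exponent arises from rounding the true exponent strictly upward, and the factor $\log n = n^{o(1)}$ is absorbed by that rounding --- precisely the remark accompanying Table~\ref{table:k}.
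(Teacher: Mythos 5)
Your proposal is correct and takes essentially the same route as the paper: reduce $k$-dispersion to $O(\log n)$ threshold $k$-clique tests by binary search over the sorted pairwise weights, solve each test via the Ne{\v{s}}et{\v{r}}il--Poljak three-block splitting combined with Eisenbrand--Grandoni rectangular matrix multiplication, and absorb the $\log n$ factor into the rounded-up exponents. The quantitative step you defer as the remaining ``obstacle'' is exactly what the paper's proof supplies, namely the explicit splits and exponent bounds $e(4)=\omega(1,2,1)<3.251$, $e(5)=\omega(2,1,2)=2\,\omega(1,0.5,1)<4.086$, $e(7)\leq\omega(2,3,2)=2\,\omega(1,1.5,1)<5.590$, and $e(8)=2\,\omega(1,2,1)<6.502$, taken from~\cite{EG04,VXXZ24,CDL25}.
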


An important advance for $k$-dispersion problem in the plane, particularly from the theoretical standpoint,
relies on the development of exact algorithm for the so-called \emph{parameterized independent set}
problem in unit disk graphs: given a set of $n$ unit disks in the plane, and a positive integer $k \leq n$,
find a set of $k$ non-intersecting disks, or report that no such set exists.
It is worth noting that $2x$ is a feasible solution to  $k$-dispersion if and only if
there exist $k$ disks of radius $x$ no pair of which intersect in their interior. 
The papers of Lev-Tov and Peleg~\cite{LP02}, Alber and Fiala~\cite{AF03}, Marx and Pilipczuk~\cite{MP15},
lead to a solution of the problem in $n^{O(\sqrt{k})}$ time based on \emph{geometric separators}.
Marx and Sidiropoulos~\cite{MS14} further generalized this result to higher dimensions. 
However, in some sense similar to FMM base algorithms, these algorithms are considered impractical due to
the large constants involved.
In particular, from a practical standpoint, it is unclear for what $k$ do these algorithms win in execution
speed over the combinatorial algorithms in Theorem~\ref{thm:plane}. 

We next consider algorithms for $k$-dispersion in $\RR^3$. 
As for the plane, the case $k=2$ corresponds to the problem of computing the \emph{diameter} of a point set and
admits an optimal algorithm running in $O(n \log{n})$ time~\cite{Ra01}. 
The case $k=3$ admits an  algorithm running in $O(n^2 \log^2{n})$ time~\cite{HNS+21}.
Here we extend the above results for arbitrary $k \geq 2$.
For $k \geq 4$, no combinatorial algorithm running in $o(n^k)$ time could be found in the literature,
although the dynamic programming machinery and higher dimensional generalization~\cite{MS14}
mentioned above suggest an $n^{O(k^{2/3})}$ time algorithm for $k$-dispersion in $\RR^3$.

\begin{theorem}\label{thm:3-space}
  Given a set $P$ of $n$ points in $\RR^3$, and a positive integer $k \geq 2$, the $k$-dispersion problem
  can be solved by a combinatorial algorithm running in
\[
  \begin{cases}
    O\left(n^{k-1} \log{n}\right) \text{time}, & \text{if } k \text{ is even};\\
    O\left(n^{k-1} \log^2{n}\right) \text{time}, & \text{if } k \text{ is odd}.
  \end{cases}
\]
\end{theorem}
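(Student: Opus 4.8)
The plan is to extend the enumeration framework behind Theorem~\ref{thm:plane}, replacing the planar diameter primitive by its three-dimensional analogue. First I would observe that every $k$-subset arises as a $(k-2)$-subset together with two further points, and that the objective decomposes accordingly: if $S\subseteq P$ has $|S|=k-2$, write $D_0=\min_{p,q\in S}\dist(p,q)$ and assign each $q\in P$ the weight $w(q)=\min\!\big(D_0,\dist(q,S)\big)$. Then the minimum pairwise distance of $S\cup\{a,b\}$ equals $\min\!\big(w(a),w(b),\dist(a,b)\big)$, so the $k$-dispersion optimum is the maximum over all $\binom{n}{k-2}$ subsets $S$ of the best two-point completion of $S$. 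Thus I would enumerate every $(k-2)$-subset $S$ and, for each, reduce the choice of the two remaining points to a single \emph{weighted farthest pair} problem in $\RR^3$: maximize $\min\!\big(w(a),w(b),\dist(a,b)\big)$ over pairs $a\neq b$. Since the enumeration contributes a factor $O(n^{k-2})$, the whole algorithm meets the claimed bounds exactly when this subproblem is solved in $O(n\log n)$ time (for even $k$) or $O(n\log^2 n)$ time (for odd $k$).

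Next I would solve the weighted farthest pair problem by reduction to prefix diameters. Sort the candidates so that $w_1\ge w_2\ge\cdots$ and set $\delta_i=\diam(\{q_1,\dots,q_i\})$. Taking the lower-weight point of a pair as the binding one, a short case analysis shows the optimum equals $\max_i\min(w_i,\delta_i)$. As $w_i$ is nonincreasing and $\delta_i$ is nondecreasing, $\min(w_i,\delta_i)$ is unimodal, and its maximizer is the crossover index $i^\ast$, the smallest $i$ with $\delta_i\ge w_i$; this predicate is monotone in $i$, so $i^\ast$ can be located by binary search. Each evaluation of $\delta_i$ is a static diameter computation in $\RR^3$, which runs in $O(n\log n)$ time by the algorithm of Ramos~\cite{Ra01} cited above. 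This already gives $O(n\log^2 n)$ per subset, hence the bound $O(n^{k-1}\log^2 n)$; this settles the odd case and, at $k=3$, reproves the $O(n^2\log^2 n)$ result of~\cite{HNS+21}.

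The remaining and, I expect, hardest step is to shave the extra logarithmic factor when $k$ is even, i.e.\ to bring the per-subset cost from $O(n\log^2 n)$ down to $O(n\log n)$. The obstacle is that the crossover search nests two logarithmic factors---one from the parametric search over the threshold, one inside each $3$-space diameter call---whereas a fully insertion-order-dynamic diameter in $\RR^3$ within a single logarithmic factor is not available off the shelf. My plan is to decouple the two searches: rather than recompute a diameter at every probe of $i^\ast$, I would drive the computation by the \emph{decision} form of three-dimensional diameter, ``do the candidates of weight at least $\lambda$ contain a pair at distance $\ge\lambda$?'', and locate the optimum among the $O(n)$ thresholds $w_i$ by a single incremental sweep, so that the parametric and geometric logarithms add rather than multiply. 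I expect the parity of $k$ to enter precisely here, through an amortization across the $\binom{n}{k-2}$ enumerated subsets that is available only when $k-2$ is even; checking that such an amortization remains correct while the weights---the capping by the changing $D_0$---vary from subset to subset is the crux of the argument and the part I would treat most carefully.
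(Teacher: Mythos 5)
Your proposal correctly solves the odd case, but via a genuinely different route than the paper, and it leaves the even case --- the half of the theorem where the claimed bound is stronger --- unproven, because it misdiagnoses where the parity actually enters. The paper reuses the recursive Algorithm~\texttt{A}$(k)$ of Theorem~\ref{thm:plane}: it enumerates a candidate \emph{closest pair} $(a,b)$ of the optimal solution ($O(n^2)$ choices), deletes every point within distance $|ab|$ of $a$ or $b$, and recurses on the pruned set as a clean, \emph{unweighted} $(k-2)$-dispersion instance, the value of the assembled solution being simply $|ab|$. Because the pair is fixed first, no capping by $D_0$ and no weighted farthest-pair machinery arise at all; the recursion $T(k,n)=\binom{n}{2}\bigl(O(n)+T(k-2,n)\bigr)$ bottoms out at $k=2$ (diameter in $\RR^3$, $O(n\log n)$ by Ramos~\cite{Ra01}) when $k$ is even, and at $k=3$ ($O(n^2\log^2 n)$, by~\cite{HNS+21}) when $k$ is odd. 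This base-case discrepancy is the \emph{only} source of the parity in the statement. Your conjecture that parity enters through ``an amortization across the $\binom{n}{k-2}$ enumerated subsets available only when $k-2$ is even'' is looking in the wrong place, and your plan for shaving the extra logarithm (decision-form diameter plus an incremental sweep over the $O(n)$ thresholds $w_i$) is not carried out and does not obviously work: each decision probe in $\RR^3$ is itself an $\Omega(n\log n)$ computation, so a sweep over $n$ thresholds does not make the two logarithms ``add.'' As written, your algorithm yields $O(n^{k-1}\log^2 n)$ for every $k$, which misses the theorem exactly in the even case.

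On the part that does work: your decomposition (enumerate $(k-2)$-subsets, complete with a weighted farthest pair solved via prefix diameters $\delta_i$ and binary search over the crossover) is sound and is a legitimately different argument from the paper's; it buys a self-contained reproof of the $k=3$ bound of~\cite{HNS+21}, whereas the paper imports that result as a black box. One small repair: $\max_i\min(w_i,\delta_i)$ need not be attained at the crossover index $i^\ast$ itself; since $\min(w_i,\delta_i)=\delta_i$ for $i<i^\ast$ and $=w_i$ for $i\ge i^\ast$, the optimum is $\max\bigl(\delta_{i^\ast-1},w_{i^\ast}\bigr)$ (e.g., $w=(10,9,1)$, $\delta=(0,8,8.5)$ gives $i^\ast=3$ but optimum $8$ at $i=2$), so one extra diameter evaluation after the binary search is needed --- this costs nothing asymptotically. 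The substantive fix, however, is to invert your decomposition as the paper does: fix the two completion points first, as the closest pair, so that the weights disappear and the residual problem is plain $(k-2)$-dispersion, solvable by recursion; then the even/odd distinction falls out of the base cases for free.
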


The running times of the FMM-based algorithms for $k$-dispersion in $\RR^d$ are the same
as those for $k$-dispersion in $\RR^2$; refer to Table~\ref{table:k}.

\medskip
We next discuss $k$-dispersion from the standpoint of approximation.
Ravi, Rosenkrantz, and Tayi~\cite{RRT94} showed that in a general weighted graph setting
(\ie, edge weights are not required to satisfy the triangle inequality), there is
no polynomial time relative approximation unless $\P=\NP$. On the other hand,
the same authors~\cite{RRT94} showed that there is a ratio $1/2$ approximation algorithm
for $k$-dispersion in graphs that satisfy the triangle inequality.
Next we provide an efficient algorithm with a much better approximation for the case of uniformly distributed
random points in the unit square.

\begin{theorem}\label{thm:random}
Let $P$ be a set of $n$ random points uniformly distributed in $[0,1]^2$ and $3 \cdot 10^5 \leq k \leq n/(10^5 \ln{n})$.
Then a $0.99$-approximation for $k$-dispersion can be computed in $O(n)$ time with probability at least $1 -1/n$,
when $n$ is large. 
\end{theorem}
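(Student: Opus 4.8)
The plan is to squeeze $\opt$ between two bounds that share the \emph{same} leading constant: an upper bound coming purely from the packing geometry of the square, and a lower bound realized by an explicit feasible solution. Because the target ratio is as large as $0.99$, a crude square-grid construction (which only reaches $\approx 0.93$) will not do; both bounds must be governed by the optimal hexagonal circle packing, whose density constant is $\tfrac{2}{\sqrt3}$.

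For the upper bound I would argue deterministically, for \emph{any} point set. A feasible solution of value $r$ is a set of $k$ points of $P \subseteq [0,1]^2$ that are pairwise at distance $\ge r$. Rescaling by $1/r$ turns these into $k$ points with pairwise distances $\ge 1$ inside $[0,1/r]^2$, so Oler's inequality (points pairwise $\ge 1$ apart in a convex region $C$ number at most $\tfrac{2}{\sqrt3}\area(C)+\tfrac12\per(C)+1$) gives
\[
  k \;\le\; \frac{2}{\sqrt3\,r^2} + \frac{2}{r} + 1 .
\]
Writing $r^{\ast}:=\sqrt{2/(\sqrt3\,k)}$ for the asymptotically optimal spacing, this yields $\opt \le r^{\ast}\bigl(1+O(k^{-1/2})\bigr)$, with the lower-order terms controlled once $k \ge 3\cdot10^{5}$.

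For the lower bound I would construct a solution on a triangular lattice $\Lambda$ of spacing $a$, choosing $a$ just below $r^{\ast}$ so that $\Lambda \cap [0,1]^2$ contains at least $k$ points; since that count is $\tfrac{2}{\sqrt3\,a^2}-O(1/a)$, this forces $a = r^{\ast}\bigl(1-O(k^{-1/2})\bigr)$. Around each lattice point $v$ I place a small disjoint cell $C_v$ (a square of half-width $\rho=\beta a$ for a small constant $\beta$). Provided every $C_v$ is nonempty, I select a point $p_v\in C_v$ for $k$ of the lattice points; any two lattice points are $\ge a$ apart and $\dist(p_v,v)\le\sqrt2\,\rho$, so the selected points are pairwise at distance $\ge a(1-2\sqrt2\,\beta)$. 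Combined with the upper bound, the ratio is at least $(1-2\sqrt2\,\beta)\bigl(1-O(k^{-1/2})\bigr)$, which exceeds $0.99$ once $\beta$ is small enough and $k\ge 3\cdot10^{5}$. To certify occupancy, note there are $O(k)=O(n)$ cells, each of area $\Theta(\beta^2/k)$, so the expected number of points in a fixed cell is $\Theta(n\beta^2/k)\ge\Theta(\beta^2\cdot10^{5}\ln n)$ by the hypothesis $k\le n/(10^{5}\ln n)$; a Chernoff bound makes a fixed cell empty with probability $n^{-\Omega(\beta^2\cdot10^{5})}$, and a union bound over the $O(n)$ cells leaves every cell occupied with probability $\ge 1-1/n$. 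For the time bound I would bucket the $n$ points into a grid of side $\Theta(\rho)$ in $O(n)$ time, keep one representative per bucket, and note that each $C_v$ meets $O(1)$ buckets, so the $k$ representatives are read off in $O(k)=O(n)$ total.

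The main obstacle is precisely the tightness of the constant $0.99$: it forces the upper and lower bounds to agree to leading order, which is why both must be pinned to the optimal hexagonal packing (Oler's inequality on one side, the triangular lattice on the other). This tightness drives the cell size $\rho$ down, and the delicate step is then to keep the cells large enough that \emph{all} of them are occupied with probability at least $1-1/n$ — reconciling ``$\beta$ small'' (for the ratio) with ``cells occupied'' (for the success probability) is exactly what dictates the explicit numerical thresholds $k\ge 3\cdot10^{5}$ and $k\le n/(10^{5}\ln n)$.
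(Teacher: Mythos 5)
Your proposal is correct and follows essentially the same route as the paper: an upper bound on $\opt$ from the optimality of hexagonal packing (you invoke Oler's inequality where the paper uses the Fejes T\'oth packing-density bound for tiling domains --- the two give the same leading term $\frac{2}{\sqrt{3}r^2}$), a feasible solution obtained from a triangular lattice of spacing just below the optimal value with one input point selected per small cell around each lattice point, a union bound over cell-emptiness events powered by $k \leq n/(10^5 \ln n)$, and constant-time assignment of points to cells for the $O(n)$ running time.

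The one place where you assert rather than verify is the crux you yourself flag: the coexistence of ``$\beta$ small enough'' for the ratio and ``$\beta$ large enough'' for the union bound over up to $n$ cells; checking the arithmetic, a valid window does exist for your square cells (roughly $\beta \in [0.0021,\, 0.0026]$, and wider if you use disk-shaped cells, as the paper does with radius $y/240$), but it is narrow enough that this numerical verification --- together with the small fix of restricting the lattice to an inner square so that every cell lies entirely inside $[0,1]^2$, as the paper does with $U_1$, since a cell protruding from the unit square loses probability mass --- is genuinely needed to call the proof complete.
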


As in~\cite{WS11}, here we follow the convention that the approximation ratio of an algorithm
for a maximization (resp., minimization) problem is less than $1$ (resp., larger than $1$). 

\subparagraph{Related work.}
Computing order statistics in Euclidean space has been studied since the early days of Computational Geometry,
see the works of Chazelle~\cite{Chaz85},  Agarwal,  Aronov, Sharir, and Suri~\cite{AASS90},
Dickerson and Drysdale~\cite{DD91}, and more recently by Chan~\cite{Chan01}.
In particular, these works include algorithms for selecting the $k$th smallest distance among $n$ points in $\RR^d$.

A general taxonomy of dispersion problems in a graph setting as well as approximation algorithms
for many of these problems were proposed by Chandra and Halld{\'o}rsson~\cite{CH01}.

Ravi, Rosenkrantz, and Tayi~\cite{RRT94} gave an algorithm running in $O(n \log{n} + kn)$ for $k$-dispersion
with $n$ points on the line. Araki and Nakano~\cite{AN22} gave an algorithm running in $O(2^k k^{2k} n)$ time,
thereby providing a linear-time solution for fixed $k$ in this setting.

Consider the $1/2$ approximation due to Ravi, Rosenkrantz, and Tayi~\cite{RRT94}
for points in $\RR^d$. It is a greedy algorithms that repeatedly chooses points, one by one from the given $n$,
and adds them to an initially empty subset, so as to maximize the distance between the new point and those already selected.
A diameter pair is chosen in the first step\footnote{This choice appears not critical, since invariant (3) in
  their proof of Theorem~2 would still hold without it. Either way, taking a diameter pair in the first step is
a valid option for the greedy choice.}.
The algorithm terminates when the growing subset of selected points reaches size $k$. 
Interestingly enough, the same algorithm computes a $2$-approximation for the Euclidean $k$-\emph{center} problem:
given a set of $n$ points, here the goal is to select $k$ of them so that the maximum distance
from any point in the set to its closest point in the subset is \emph{minimized}; see~\cite[Chap.~4.2]{HP11},
and~\cite[Chap.~2.2]{WS11}.

Another research direction in connection with dispersion is as follows.
Let $\R$ be a family of $n$ subsets of a metric space.
The problem of {\em dispersion in $\R$} is that of selecting $n$ points,
one in each subset, such that the minimum inter-point distance is maximized.
This dispersion problem was introduced by Fiala et~al.~\cite{FKP05}
as \emph{systems of distant representatives},
generalizing the classic problem \emph{systems of distinct representatives}.
Fiala et~al.~\cite{FKP05} showed that dispersion in unit disks is already NP-hard.
Approximation algorithms with small constant ratios for the case when 
$\R$ is a set of unit disks in the plane were first obtained by Cabello~\cite{Ca07},
and later improved and extended to arbitrary disks by Dumitrescu and Jiang~\cite{DJ12}.
In another direction, given a finite family of convex bodies in $\RR^d$,
the same authors~\cite{DJ15} gave a sufficient condition for the existence of a system
of distinct representatives for the objects that are also distant from each other.

\subparagraph{Preliminaries.}
Recall that $\omega < 2.372$ is the exponent of matrix multiplication~\cite{Mat10},
namely the infimum of numbers $\tau$ such that two $n \times n$ real matrices can be multiplied in
$O(n^\tau)$ time (operations). Similarly, let $\omega(p,q,r)$ stand for the infimum of numbers $\tau$ such that
an $n^p \times n^q$ matrix can be multiplied by an $n^q \times n^r$ matrix in $O(n^{\tau})$ time (operations). 
Throughout this paper, $\log{x}$ and $\ln{x}$ denote the logarithms of $x$ in base $2$ and $\e$, respectively.

\section{Combinatorial algorithms}  \label{sec:comb}

\subparagraph{$k$-dispersion in the plane.} 
We give a recursive algorithm for the planar version; the same algorithm, with an updated
analysis, however, can be used to solve the $k$-dispersion problem in $\RR^d$.
The algorithm includes a closest pair of points of the returned $k$-subset in the output.

\begin{proof}[Proof of Theorem \ref{thm:plane}]
  We prove the theorem by induction on $k$. The cases $k=2$ and $k=3$ have been already verified
  (in Section~\ref{sec:intro}) and they provide the basis for the induction. We now prove the statement
  for $k \geq 4$, assuming that it holds for previous values.

  Note that the minimum distance, say, $x$, in an optimal $k$-subset $K \subset P$ of points
  corresponds to a closest pair of points in $K$. The algorithm correctly guesses $x$ by scanning all
  ${n \choose 2}$ pairs of points in $P$.

  \newpage
\noindent{{\bf Algorithm~\texttt{A}$(k)$}}
\begin{itemize} \itemsep 1pt
\item[] Set $\mathit{current\_best}=0$ and $\mathit{result}=\emptyset$;
\item[] For each pair of points $a,b \in P$:
\item []
  \begin{itemize} \itemsep 1pt
    \item [] Step 1. Set $x:= |ab|$;
    \item [] Step 2. Remove from $P$ every point $p$ such that $|pa| < x$ or $|pb| < x$;
    \item [] Step 3. Let $P'$ denote the remaining set;
    \item [] Step 4. If $|P'| < k-2$, skip to the next pair $a,b$, else
      run Algorithm~\texttt{A}$(k-2)$ on $P'$ and let $K'$ be the output $(k-2)$-subset; 
    \item [] Step 5. If the minimum distance in $K'$ is less than $x$,  skip to the next pair $a,b$;
    \item [] Step 6. If $x>\mathit{current\_best}$, set $\mathit{current\_best}=x$ and
      set $\mathit{result}$ to be $\{a, b\}\cup K'$ with $a,b$ as a closest pair;
  \end{itemize}
\item[] Return $\mathit{result}$.
\end{itemize}

It is clear that the algorithm works correctly, since by induction Algorithm~\texttt{A}$(k-2)$
returns a $(k-2)$-subset of $P'$ with the largest minimum distance.
Note that Step 2 takes $O(n)$ time
and that Algorithm~\texttt{A}$(k-2)$ runs on a set of at most $n$ points, which by the induction hypothesis,
takes $O\left(n^{k-3} \log{n}\right)$ time, for $k \geq 4$.
Thus the running time is at most
\[ {n \choose 2} \cdot O \left(n + n^{k-3} \log{n} \right)  = O\left(n^{k-1} \log{n}\right), \]
completing the induction step and thereby the proof of the theorem.
\end{proof}

\subparagraph{$k$-dispersion in $d$-space.} 

\begin{proof}[Proof of Theorem \ref{thm:3-space}]
 Recall that $d=3$. We use the same recursive Algorithm~\texttt{A}$(k)$.
  The only difference is in the basis of the recursion, where the solutions for $k=2$ and $k=3$, take
  $O(n \log{n})$ and  $O(n^2 \log^2{n})$ time, respectively; see~\cite[Thm.~8]{HNS+21} for the case $k=3$.
\end{proof}

\subparagraph{Remark.} Taking into account that the diameter of $n$ points in $\RR^d$ can be computed
by an algorithm of Yao~\cite{Yao82} in time\footnote{These running times are misprinted in~\cite[Thm.~8]{HNS+21}.}
\[ O\left(n^{2 - \alpha(d)} (\log{n})^{1 - \alpha(d)} \right), \text{ where } \alpha(d) = 2^{-(d+1)}, \]
the same analysis of Algorithm~\texttt{A}$(k)$ yields that the $k$-dispersion of $n$ points in $\RR^d$
can be computed in
\[
  \begin{cases}
   O\left(n^{k - \alpha(d)} (\log{n})^{1 - \alpha(d)} \right) \text{time}, & \text{if } k \text{ is even};\\
   O\left(n^{k - \alpha(d)} (\log{n})^{2 - \alpha(d)} \right) \text{time}, & \text{if } k \text{ is odd}.
  \end{cases}
\]

Indeed, as shown in~\cite[Thm.~8]{HNS+21}, the $k$-dispersion of $n$ points in $\RR^d$
can be computed in time
\[ O\left(n^{2 - \alpha(d)} (\log{n})^{1 - \alpha(d)} \right), \text{ and }
   O\left(n^{3 - \alpha(d)} (\log{n})^{2 - \alpha(d)} \right), \]
for $k=2$ and $k=3$, respectively, verifying the induction base.
Note that all these algorithms for $k$-dispersion run in $o(n^k)$ time.

\section{FMM based algorithms for $k$-dispersion in graphs}  \label{sec:FMM}

We next discuss algorithms for $k$-dispersion in graphs that use Fast Matrix Multiplication (FMM).
It is worth noting that these algorithms \emph{can} be used for $k$-dispersion in $\RR^d$.

\begin{proof}[Proof of Theorem \ref{thm:fmm}]
The improved running times specified in the second row of Table~\ref{table:k} follow from results
on fast rectangular multiplication due to Eisenbrand and Grandoni.
For each $k$, the algorithm relies on a specific rectangular multiplication combination.
Specific details on these instantiations can be found in~\cite[pp.~12--13]{CDL25};
entries from~\cite[Table~3]{LU18} and~\cite[Table~1]{VXXZ24} are relevant; see also~\cite{DL23}.
Let $e(k)$ denote the exponent that appears in the running time, \ie, the algorithm for $k$-dispersion
runs in $O(n^{e(k)})$ time; these are the exponents that appear in the running times in row $2$ of Table~\ref{table:k}.
Recall that $\omega< 2.372$.

\begin{itemize} \itemsep 2pt

\item $e(4) = \omega(1,1,2) = \omega(1,2,1) < 3.251$; see~\cite{EG04}.

\item $e(5) =\omega(2,1,2) = 2 \omega(1,0.5,1) < 2 \cdot 2.043 = 4.086$;  see~\cite{EG04,VXXZ24}.

\item $e(6)  = 2 \omega < 4.744$; see~\cite{IR78,NP85}.

\item $e(7) = \leq \omega(2,3,2) = 2 \omega(1,1.5,1) < 2 \cdot 2.795 = 5.590$; see~\cite{EG04}.

\item $e(8) = 2 \omega(1,1,2) = 2 \omega(1,2,1) < 2 \cdot 3.251 = 6.502$; see~\cite{EG04}. \qedhere
  
\end{itemize}
\end{proof}

\section{Approximation algorithm for random points}   \label{sec:random}

We start with the description of the algorithm.

\medskip     
\noindent{{\bf Algorithm~\texttt{B}$(k)$}} 
Input: $n$ points randomly and uniformly distributed in $U=[0,1]^2$
\begin{itemize} \itemsep 2pt
\item  [] Step 1. Lay out (in an arbitrary fashion) a triangular lattice of side length
  $\sqrt{3} \, y$, where $y = \sqrt{0.995} \cdot \left(\frac{4}{27} \right)^{1/4} \cdot \frac{1}{\sqrt{k}}$;
\item  [] Step 2. Let $U_1 = [y/2, 1-y/2]^2$, $\Lambda_1$ be the set of lattice points in $U_1$,
  and $\Omega_1$ be the set of (small) disks of radius $r=y/240$ centered at points in $\Lambda_1$;
\item  [] Step 3. Distribute the $n$ points one by one to the appropriate disk in $\Omega_1$ or to a leftover list;
\item  [] Step 4. Arbitrarily select a point in each disk, if there is, and append it to an initially empty
  output list. Stop when the output list has reached size $k$;
\end{itemize}

\begin{proof}[Proof of Theorem \ref{thm:random}]
We first deduce  a lower bound on $\opt$, the distance in an optimal solution. 
Let $K$ be an optimal solution with $k$ points for a given instance, 
$\opt=2x$ be the minimum inter-point distance in $K$, $|K|=k$, and $U_2 = [-x, 1+x]^2$. 
Observe that the disks of radius $x$ centered at the points in $K$ are pairwise disjoint
and contained in $U_2$.  We have $\area(U_2) = (1+2x)^2$. 

The square $U_2$ is a so-called \emph{tiling domain}, \ie, a domain that can be used to
tile the whole plane~\cite[Ch.~3.4]{FFK23}. Recall that $k \geq 3 \cdot 10^5$. 
A packing argument (as in~\cite[p.~66]{FFK23}) requires that
\[ k \pi x^2 \leq \frac{\pi}{\sqrt{12}} \cdot \area(U_2), \text{ or } 
x \leq \frac{1}{12^{1/4} \sqrt{k} -2} \leq \frac{1.002}{12^{1/4} \sqrt{k}}, \]
where the second inequality follows from the assumption on $k$. As such, we have
\begin{equation} \label{eq:opt} 
\opt=2x \leq  \frac{2 \cdot 1.002}{12^{1/4} \sqrt{k}}. 
\end{equation}

\medskip
Next we analyze the algorithm and the quality of the solution produced by it.
Since every input point can be assigned to the appropriate small disk in $\Omega_1$, if any,
in $O(1)$ time, it is clear that the algorithm takes $O(n)$ time. 
By the assumption on $k$, we have
\[ y = \sqrt{0.995} \cdot \left(\frac{4}{27} \right)^{1/4} \cdot \frac{1}{\sqrt{k}} \leq 0.002. \]
The square $U_1$ is clearly a tiling domain. We have $\area(U_1) \geq (1-y)^2 \geq 0.998^2 \geq 0.995$. 
By construction, the disks of radius $y$ centered at the points in $\Lambda_1$ cover $U_1$.
Finally, observe that all disks in $\Omega_1$ are entirely contained in $U$. 

Let $m=|\Lambda_1|$. A packing argument (as in~\cite[p.~66]{FFK23}) requires that
\[ m \pi y^2 \geq \frac{2 \pi}{\sqrt{27}} \cdot \area(U_1), \text{ or }
m \geq \frac{2}{\sqrt{27}} \cdot \frac{(1-y)^2}{y^2} \geq
\frac{2}{\sqrt{27}} \cdot  \frac{0.995}{0.995} \cdot \frac{\sqrt{27}}{2} \cdot k = k. \]
It follows that Algorithm~\texttt{B} outputs $k$ points whose pairwise distances
are at least
\[ \sqrt{3} y - 2r = \sqrt{3} y \left(1 - \frac{2r}{\sqrt{3} y}\right)  \geq 0.995 \sqrt{3} y =
0.995^{3/2} \cdot \sqrt{3} \cdot \left(\frac{4}{27} \right)^{1/4} \cdot \frac{1}{\sqrt{k}}. \]
That is, the minimum inter-point distance, $\alg$, in the solution constructed by the algorithm satisfies
\begin{equation} \label{eq:alg} 
  \alg \geq 0.995^{3/2} \cdot \sqrt3 \cdot \left(\frac{4}{27} \right)^{1/4} \cdot \frac{1}{\sqrt{k}}.
\end{equation}

The resulting approximation ratio of the algorithm is 
\begin{align*}
 \frac{\alg}{\opt} &\geq 0.995^{3/2} \cdot \sqrt{3} \cdot \left(\frac{4}{27} \right)^{1/4} \frac{12^{1/4}}{2 \cdot 1.002}
  = \sqrt{3} \cdot \frac{0.995^{3/2}}{1.002} \cdot \frac{\sqrt{2} \cdot \sqrt{2} \cdot 3^{1/4}}{\sqrt{3} \cdot 3^{1/4} \cdot 2}\\
  &= 0.995^{3/2}/1.002 \geq 0.99. 
\end{align*}

It remains to show that with high probability, there exist at least $k$ nonempty disks. Fix any set
$\Omega'_1 \subseteq \Omega_1$ of $k$ disks (out of $m$). We show that the probability that at least one of them
is empty is small, \ie, at most $1/\poly(n)$. Thus with high probability each of them is nonempty, as desired.

Consider a fix disk $D \in \Omega'_1$. Recall that its radius is $r= \frac{1}{240}y$  and that $D$ is contained in $U$.
Let $E_D$ be the event that $D$ is empty of points in $P$, and let $E$ be the event that at least one disk
in $\Omega'_1$ is empty of points in $P$. We bound from below the area of each disk in $\Omega'_1$:
\begin{align*}
r &=\frac{y}{240} = \frac{1}{240} \cdot \sqrt{0.995} \cdot \left(\frac{4}{27} \right)^{1/4} \cdot \frac{1}{\sqrt{k}}
\geq \frac{1}{390 \sqrt{k}}, \text{ whence } \\
\pi r^2 &\geq \frac{1}{50000 k}.
\end{align*}
Since the $n$ points are randomly and uniformly distributed in $U$, we have
\[ \Prob(E_D) \leq (1-\pi r^2)^n \leq \left( 1 -\frac{1}{50000 k} \right)^n \leq \exp\left(\frac{-n}{50000 k}\right)
\leq \exp\left(-2\ln{n} \right) = \frac{1}{n^2}, \]
by applying the standard inequality $\left(1-x\right)^{1/x} \leq 1/\e$ for $0<x<1$. 
By the union bound~\cite[Lemma~1.2]{MU17}, it follows that
\[ \Prob(E) \leq k \cdot \Prob(E_D) \leq \frac{1}{n}, \]
as claimed. 
\end{proof}

\subparagraph{Remark.} It is clear that  Algorithm~\texttt{B} can be adjusted so that it works 
in a less constrained setting (\eg, also for a smaller $k$) at the cost of reducing the approximation ratio,
say, to $0.9$. For instance, the requirement in Theorem~\ref{thm:random} could be relaxed to
$900 \leq k \leq n/(2500\ln n)$ to achieve a $0.9$ approximation.
Likewise, it is also clear that the algorithm can be adjusted in the opposite direction to boost
its approximation ratio beyond $0.99$.

\section{Concluding remarks} \label{sec:remarks}

We highlight two questions of interest:

\begin{enumerate} \itemsep 2pt

\item Is there an approximation algorithm for $k$-dispersion in the plane with a constant ratio above $1/2$? 

\item What is an approximate switchover value for $k$, at which the parameterized independent set algorithms for $k$-dispersion
  in the plane running in $n^{O(\sqrt{k})}$ time, would be faster than the combinatorial algorithms in Theorem~\ref{thm:plane}?

\end{enumerate}

\end{document}